\newcommand{\A}{\ensuremath{\mathcal{A}}\xspace}
\newcommand{\B}{\ensuremath{\mathcal{B}}\xspace}
\newcommand{\myqed}{}
\newcommand{\trans}[2][{}]{\mathbin{\smash{\overset{{#2}}{{\to}}_{#1}}}}
\newcommand{\btrans}[2][{}]{\,{\smash{\overset{{#2}}{{\nrightarrow}}_{#1}}}}
\newtheorem{definition}{Definition}
\newtheorem{theorem}{Theorem}
\begin{document}


\title{On the Complexity of Flanked\\ Finite State Automata}

\author[1,2,3]{Florent Avellaneda}
\author[1,3]{Silvano Dal Zilio}
\author[2,3]{Jean-Baptiste Raclet}
\affil[1]{CNRS, LAAS, F-31400 Toulouse,  France}
\affil[2]{IRIT, F-31400 Toulouse, France}
\affil[3]{Univ de Toulouse, F-31400 Toulouse, France}
\date{}
\maketitle
\begin{abstract}
  We define a new subclass of nondeterministic finite automata for
  prefix-closed languages called \emph{Flanked Finite Automata}
  (FFA). We show that this class enjoys good complexity properties
  while preserving the succinctness of nondeterministic automata. In
  particular, we show that the universality problem for FFA is in
  linear time and that language inclusion can be checked in polynomial
  time. A useful application of FFA is to provide an efficient way to
  compute the quotient and inclusion of regular languages without the
  need to use the powerset construction. These operations are the
  building blocks of several verification algorithms.
\end{abstract}

\section{Introduction}

While the problems of checking universality or language inclusion are
known to be computationally easy for Deterministic Finite Automata
(DFA), they are PSPACE-complete for Nondeterministic Finite Automata
(NFA). On the other hand, the size of a NFA can be exponentially
smaller than the size of an equivalent minimal DFA. This gap in
complexity between the two models can be problematic in practice. This
is for example the case when using finite state automata for system
verification, where we need to manipulate very large number of states.

Several work have addressed this problem by trying to find classes of
finite automata that retain the same complexity than DFA on some
operations while still being more succinct than the minimal DFA. A
good survey on the notion of determinism for automata is for
example~\cite{colcombet2012}. One such example is the
class of Unambiguous Finite Automata
(UFA)~\cite{schmidt78,stearns1985equivalence}. Informally, a UFA is a
finite state automaton such that, if a word is accepted, then there is
a unique run which witnesses this fact, that is a unique sequence of
states visited when accepting the word. Like with DFA, the problems of
universality and inclusion for UFA is in polynomial-time.

In this paper, we restrict our study to automaton accepting prefix
closed languages. More precisely, we assume that all the states of the
automaton are final (which corresponds exactly to the class of
prefix-closed regular languages). This restriction is very common when
using NFA for the purpose of system verification. For instance, Kripke
structures used in model-checking algorithms are often interpreted as
finite state automaton where all states are final. It is easy to see
that, with this restriction on prefix-closed language, an UFA is
necessarily deterministic. Therefore new classes of NFA, with the same
nice complexity properties than UFA, are needed in this context. We
can also note that the classical complexity results on NFA are still
valid when we restrict to automata accepting prefix-closed
language. For instance, given a NFA \A with all its states final,
checking the universality of \A is
PSPACE-hard~\cite{kao09:_nfas}. Likewise for the minimization
problem. Indeed there are examples of NFA with $n$ states, all finals,
such that the minimal equivalent DFA has $2^n$
states~\cite[Sect.~7]{kao09:_nfas}. We provide such an example in
Sect.~\ref{sec:succ-flank-autom} of this paper. Therefore this
restriction does not intrinsically change the difficulty of our task.

We define a new class of finite state automaton called \emph{Flanked
  Finite Automata} (FFA) that has complexity properties similar to
that of UFA but for prefix-closed language. Informally, a FFA includes
extra-information that can be used to check efficiently if a word is
not accepted by the automaton. In Sect.~\ref{sec:compl-results-basic},
we show that the universality problem for FFA is in linear-time while
testing the language inclusion between two FFA $\A$ and $\B$ is in
time $O(|\A|.|\B|)$, where $|\A|$ denotes the size of the automaton
$\A$ in number of states. In Sect.~\ref{prop:intersection}, we define
several operations on FFA. In particular we describe how to compute a
flanked automata for the intersection, union and quotient of two
languages defined by FFA. Finally, before concluding, we give an
example of (a family of) regular languages that can be accepted by FFA
which are exponentially more succinct than their equivalent minimal
DFA.

Our main motivation for introducing this new class of NFA was to
provide an efficient way to compute the quotient of two regular
languages $L_1$ and $L_2$. This operation, denoted $L_1 / L_2$ and defined in
Sect.~\ref{sec:clos-prop-flank}, is central in several automata-based
verification problems that arise in applications ranging from the
synthesis of discrete controller to the modular verification of
component-based systems. For example, it has been used in the
definition of contract-based specification
theories~\cite{benveniste2008multiple,bauer12} or as a key operation
for solving language equations~\cite{7202840}. With our approach, it
is possible to construct the quotient of two flanked automaton, $\A_1$
and $\A_2$, using less than $|\A_1|.|\A_2| + 1$ states; moreover the
resulting automata is still flanked. We believe that this work
provides the first algorithm for computing the quotient of two
languages without resorting to the powerset construction on the
underlying automata, that is without determinizing them.

\section{Notations and Definitions}

A finite automaton is a quintuple $\mathcal{A} = (Q, \Sigma, E, I)$
where: $Q$ is a finite set of states; $\Sigma$ is the alphabet of \A
(that is a finite set of symbols);
$E \subseteq Q \times \Sigma \times Q$ is the transition relation; and
$I \subseteq Q$ is the set of initial states. In the remainder of this
text, we always assume that every states of an automaton is final,
hence we do not need a distinguished subset of accepting
states. Without loss of generality, we also assume that every state in
$Q$ is reachable in \A from $I$ following a sequence of transitions in
$E$.

For every word $u \in \Sigma^*$ we denote $\mathcal{A}(u)$ the subset
of states in $Q$ that can be reached when trying to accept the word
$u$ from an initial state in the automaton. We can define the set
$\A(u)$ by induction on the word $u$. We assume that $\epsilon$ is the
empty word and we use the notation $u\,a$ for the word obtained form
$u$ by concatenating the symbol $a \in \Sigma$; then:
\[
\begin{array}{lcl}
  \A(\epsilon) & = & I\\
  \A(u\,a) & = & \{ q' \in Q \mid \exists q \in \A(u) . (q, a, q') \in E\}\\
\end{array}
\]

By extension, we say that a word $u$ is accepted by \A, denoted
$u \in \A$, if the set $\A(u)$ is not empty.

\begin{definition}
  A Flanked Finite Automaton (FFA) is a pair $(\mathcal{A}, F)$ where
  $\mathcal{A} = (Q, \Sigma, E, I)$ is a finite automaton and
  $F: Q \times \Sigma$ is a ``flanking function'', that associates
  symbols of $\Sigma$ to states of \A. We also require the following
  relation between \A and $F$:
  \begin{equation}\tag{\protect{F$\star$}} \label{eq:FF}
    \begin{array}{l}
      \forall u \in \Sigma^* , a \in \Sigma . \big ( \left
      ( u \in \A \wedge u\,a \notin \A \right ) \Leftrightarrow
      \exists q \in \mathcal{A}(u) . (q, a) \in F\, \big )
    \end{array}
  \end{equation}
  We will often use the notation $q \trans{a} q'$ when
  $(q, a, q') \in E$, that is when there is a transition from $q$ to
  $q'$ with symbol $a$ in \A. Likewise, we use the notation
  $q \btrans{a}$ when $(q, a) \in F$.
\end{definition}

With our condition that every state of an automaton is final, the
relation $q \trans{a} q'$ states that every word $u$ ``reaching'' $q$
in \A can be extended by the symbol $a$; meaning that $u\,a$ is also
accepted by \A. Conversely, the relation $q \btrans{a}$ states that
the word $u\,a$ is not accepted. Therefore, in a FFA $(\A, F)$, when
$q \in \A(u)$ and $(q, a) \in F$, then we know that the word $u$
cannot be extended with $a$. In other words, the flanking function
gives information on the ``frontier'' of a prefix-closed
language---the extreme limit over which words are no longer accepted
by the automaton---hence the use of the noun \emph{flank} to describe
this class.

In the rest of the paper, we simply say that the pair $(\A, F)$ is
\emph{flanked} when condition~\eqref{eq:FF} is met. We also say that
the automaton \A is \emph{flankable} if there exist a flanking
function $F$ such that $(\A, F)$ is flanked.

\subsection{Testing if a Pair $(\A, F)$ is Flanked}
\label{sec:testing-if-pair}

We can use the traditional Rabin-Scott powerset construction to test
whether $F$ flanks the automaton $\A = (Q, \Sigma, E, I)$.  We build
from \A the ``powerset automaton'' $\wp(\A)$, a DFA with alphabet
$\Sigma$ and with states in $2^Q$ (also called classes) that are the
sets of states in $Q$ reached after accepting a given word prefix;
that is all the sets of the form $\A(u)$. The initial state of
$\wp(\A)$ is the class $\A(\epsilon) = I$. Finally, we have that
$C \trans{a} C'$ in $\wp(\A)$ if and only if there is $q \in C$ and
$q' \in C'$ such that $q \trans{a} q'$.

Let $F^{-1}(a)$ be the set $\{ q \mid q \btrans{a} \}$ of states that
``forbids'' the symbol $a$ after a word accepted by \A.  Then the pair
$(\A, F)$ is flanked if, for every possible symbol $a \in \Sigma$ and
for every reachable class $C \in \wp(\A)$ we have:
$C \cap F^{-1}(a) \neq \emptyset$ if and only if there are no class
$C'$ such that $C \trans{a} C'$.

This construction shows that checking if a pair $(\A, F)$ is flanked
should be a costly operation, that is, it should be as complex as
exploring a deterministic automaton equivalent to \A. In
Sect.~\ref{sec:compl-results-basic} we prove that this problem is
actually PSPACE-complete.

\subsection{Testing if a NFA is Flankable}
\label{sec:testing-if-an}

It is easy to show that the class of FFA includes the class of
deterministic finite state automaton; meaning that every DFA is
flankable. If an automaton \A is deterministic, then it is enough to
choose the ``flanking function'' $F$ such that, for every state $q$ in
$Q$, we have $q \btrans{a}$ if and only if there are no transitions of
the form $q \trans{a} q'$ in \A. DFA are a proper subset of FFA;
indeed we give examples of NFA that are flankable in
Sect.~\ref{sec:succ-flank-autom}.

On the other way, if an automaton is not deterministic, then in some
cases it is not possible to define a suitable flanking function $F$.
For example, consider the automaton from Fig.~\ref{fig:flankable} and
assume, by contradiction, that we can define a flankable function $F$
for this automaton. The word $b$ is accepted by $\A$ but the word
$b\, b$ is not, so by definition of FFA (see eq.~\eqref{eq:FF}), there
must be a state $q \in \A(b)$ such that $q \btrans{b}$. Hence, because
$q_1$ is the only state in $\A(b)$, we should necessarily have
$q_1 \btrans{b}$.  However, this contradicts the fact that the word
$a\,b$ is in \A, since $q_1$ is also in $\A(a)$.

\tikzstyle{edge} = [draw,thick,->]
\begin{figure}
\centering
\begin{minipage}[c]{0.4\linewidth}
  \center
  \begin{tikzpicture}[scale=1.5]
    \node [place, double] (p0) at (0,0) {$q_0$};
    \node [place, double] (p1) at (-0.5,-1) {$q_1$};
    \node [place, double] (p2) at (0.5,-1) {$q_2$};
    \node [place, double] (p3) at (0.5,-2) {$q_3$};

    \path[edge] (p0) edge[bend left] node[right] {$a$} (p1);
    \path[edge] (p0) edge[bend right] node[left] {$b$} (p1);
    \path[edge] (p0) edge[bend left] node[right] {$a$} (p2);
    \path[edge] (p2) edge node[right] {$b$} (p3);

    \path[edge] node[above]  {} ++(-0.5,0) -- (p0) ;
  \end{tikzpicture}
\end{minipage}\hfill
\begin{minipage}[c]{0.4\linewidth}
  \center
  \begin{tikzpicture}[scale=1.5]
    \node (p0) at (0,0) {$\{q_0\}$};
    \node (p1) at (1,0.5) {$\{q_1\}$};
    \node (p12) at (1,-0.5) {$\{q_1, q_2\}$};
    \node (p3) at (2,-0.5) {$\{q_3\}$};

    \path[edge] (p0) edge[bend left] node[above] {$b$} (p1);
    \path[edge] (p0) edge[bend right] node[above] {$a$} (p12);
    \path[edge] (p12) edge node[above] {$b$} (p3);
  \end{tikzpicture}
\end{minipage}
\caption{\label{fig:flankable}An example of non-flankable NFA (left)
  and its associated Rabin-Scott powerset construction (right).}
\end{figure}
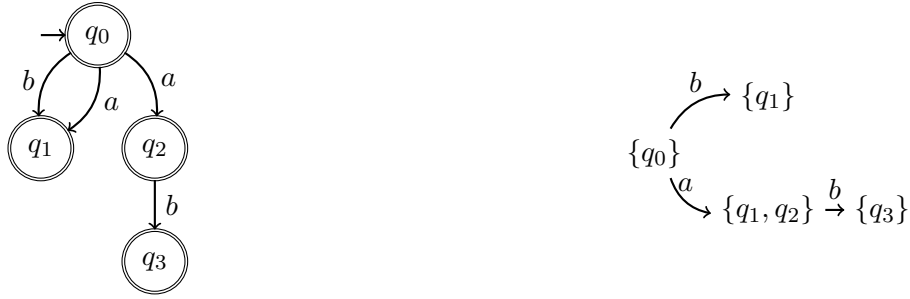

More generally, it is possible to define a necessary and sufficient
condition for the existence of a flanking function; this leads to an
algorithm for testing if an automaton \A is flankable. Let
$\A^{-1}(a)$ denotes the set of states reachable by words that can be
extended by the symbol $a$ (remember that we consider prefix-closed
languages):
\[
\A^{-1}(a) \ = \ \bigcup \{ \A(u) \mid u\,a \in \A \}
\]

It is possible to find a flanking function $F$ for the automaton $\A$
if and only if, for every word $u \in \A$ such that $u\, a \notin \A$
then the set $\A(u) \setminus \A^{-1}(a)$ is not empty. Indeed, in
this case, it is possible to choose $F$ such that $(q, a) \in F$ as
soon as there exists a word $u$ with
$q \in \A(u) \setminus \A^{-1}(a)$. 

Conversely, an automaton \A is not flankable if we can find a word
$u \in \A$ such that $u\,a \notin \A$ and
$\A(u) \subseteq \A^{-1}(a)$. For example, for the automaton in
Fig.~\ref{fig:flankable}, we have $\A^{-1}(b) = \{ q_0, q_1, q_2 \}$
while $b\,b \notin \A$ and $\A(b) = \{ q_1 \}$.

This condition can also be checked using the powerset
construction. Indeed, we can compute the set $\A^{-1}(a)$ by taking
the union of the classes in the powerset automaton $\wp(\A)$ that are
the source of an $a$ transition. Then it is enough to test this set
for inclusion against all the classes that have no outgoing
transitions labeled with $a$ in $\wp(\A)$.

\section{Complexity Results for Basic Problems}
\label{sec:compl-results-basic}

In this section we give some results on the complexity of basic
operations over FFA.

\begin{theorem}\label{prop:universality FFA}
  The universality problem for FFA is decidable in linear time.
\end{theorem}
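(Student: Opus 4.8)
The plan is to show that for a flanked automaton $(\A, F)$, the language of $\A$ is universal (i.e.\ equals $\Sigma^*$) if and only if no state of $\A$ carries any flank; that is, $F = \emptyset$. This reduces universality to a single linear scan of $F$. First I would establish the easy direction: if there is a state $q$ and a symbol $a$ with $q \btrans{a}$, then since we assume every state is reachable, pick a word $u$ with $q \in \A(u)$; by condition~\eqref{eq:FF} (the right-to-left implication) we get $u\,a \notin \A$, so $\A$ is not universal. Hence universality implies $F = \emptyset$.

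For the converse, suppose $F = \emptyset$ and assume for contradiction that $\A$ is not universal. Then there is a shortest word $w$ with $w \notin \A$; write $w = u\,a$ with $u \in \Sigma^*$ and $a \in \Sigma$. By minimality $u \in \A$, and of course $u\,a \notin \A$. Applying the left-to-right implication of~\eqref{eq:FF} to $u$ and $a$ yields a state $q \in \A(u)$ with $(q,a) \in F$, contradicting $F = \emptyset$. Therefore $F = \emptyset$ forces $\A$ to be universal, and the equivalence is proved.

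Finally I would note the complexity: the flanking function $F$ is part of the input, so deciding whether $F = \emptyset$ takes time linear in $|(\A,F)|$ — indeed one need only check whether $F$ contains a single pair. No exploration of $\A$ and certainly no powerset construction is required, which is exactly the point of introducing the flanking data.

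I do not expect a genuine obstacle here; the whole argument is a direct unfolding of condition~\eqref{eq:FF} together with the standing assumptions that every state is final and every state is reachable. The only point that deserves care is the use of reachability in the easy direction (to produce a witness word $u$ for the flanked state $q$) and the use of a shortest counterexample in the converse direction; both are routine but should be stated explicitly so that the linear-time claim is not mistaken for something requiring a search through $\A$.
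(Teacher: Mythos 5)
Your proposal is correct and follows essentially the same route as the paper: both reduce universality to the emptiness of $F$, prove one direction via a shortest non-accepted word $u = v\,a$ and condition~\eqref{eq:FF}, and prove the other via reachability of every state, then observe that checking $F = \emptyset$ is a linear scan. The only cosmetic difference is that you phrase the easy direction contrapositively ($F \neq \emptyset$ implies non-universal) where the paper argues it directly.
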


\begin{proof}
  We consider a FFA $(\A, F)$ with $\A = (Q, \Sigma, E, I)$ and we
  want to check that every word $u \in \Sigma^*$ is accepted by \A. We
  assume that $Q$ and $I$ are not empty and that every state is
  reachable in \A. We also assume that the function $F$ is ``encoded''
  a mapping from $Q$ to sequences of symbols in $\Sigma$.

  We start by proving that \A is universal if and only if the relation
  $F$ is empty; meaning that for all states $q \in Q$ it is not
  possible to find a symbol $a \in \Sigma$ such that $q \btrans{a}$.
  As a consequence, all words reaching a state $q$ in \A can always be
  extended by any symbol of $\Sigma$.

  \begin{description}
  \item[{\A universal implies $F$ empty.}] If the automaton \A is
    universal then every word $u \in \Sigma^*$ is accepted by \A and
    can be extended by any symbol $a \in \Sigma$.  Hence, by
    definition of FFA (see eq.~\eqref{eq:FF}) we have that
    $(q, a) \notin F$ for all symbol $a$ in $\Sigma$. Hence
    $F$ is the empty relation over $Q \times \Sigma$.\\

  \item[{\A not universal implies $F$ not empty.}] Assume that $u$ is
    the shortest word not accepted by \A.  We have that
    $u \neq \epsilon$, since $I$ is not empty. Hence there exist a
    word $v$ such that $u = v \, a$ and $v$ is accepted. Again, by
    definition of FFA (see eq.~\eqref{eq:FF}), there must be a state
    $q \in \A(v)$ such that $q \btrans{a}$; and therefore $F$ is not
    empty.
  \end{description}

  As a consequence, to test whether \A is universal, it is enough to
  check whether there is a state $q \in Q$ that is mapped to a
  non-empty set of symbols in $F$. Note that, given a different
  encoding of $F$, this operation could be performed in constant
  time.\myqed
\end{proof}

We can use this result to settle the complexity of testing if an
automaton is flankable.

\begin{theorem}
  Given an automaton $\A = (Q, \Sigma, E, I)$ and a relation
  $F \in Q \times \Sigma$, the problem of testing if $(\A, F)$ is a
  flanked automaton is PSPACE-complete when there is at least two
  symbols in $\Sigma$.
\end{theorem}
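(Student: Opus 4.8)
The plan is to establish both membership in PSPACE and PSPACE-hardness.

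For the upper bound, I would observe that the characterization given in Section~\ref{sec:testing-if-pair} reduces the problem to a reachability question in the powerset automaton $\wp(\A)$: the pair $(\A, F)$ is flanked if and only if for every reachable class $C$ and every symbol $a \in \Sigma$, we have $C \cap F^{-1}(a) \neq \emptyset$ exactly when $C$ has no outgoing $a$-transition in $\wp(\A)$. Since each class is a subset of $Q$, it has polynomial size, and its $a$-successor can be computed in polynomial time. Thus a nondeterministic procedure can guess, symbol by symbol, a path through $\wp(\A)$ reaching a class $C$ that witnesses a violation of condition~\eqref{eq:FF}---i.e.\ either $C$ meets $F^{-1}(a)$ yet has an $a$-successor, or $C$ misses $F^{-1}(a)$ yet has no $a$-successor. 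This uses only polynomial space (we keep the current class and verify successor existence locally), so by Savitch's theorem the problem is in PSPACE; in fact its complement is in NPSPACE${}={}$PSPACE, and PSPACE is closed under complement.

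For the lower bound, I would reduce from NFA universality restricted to prefix-closed languages (every state final), which is PSPACE-hard by~\cite{kao09:_nfas}. Given such an NFA $\A = (Q, \Sigma, E, I)$ over an alphabet with at least two symbols, I want to build in polynomial time an instance $(\A', F')$ that is flanked if and only if $\A$ is universal. The idea is to exploit Theorem~\ref{prop:universality FFA}: a flankable automaton whose flanking function is empty must be universal, and conversely. Concretely, take $\A' = \A$ and let $F'$ be the empty relation. Then $(\A', F')$ satisfies~\eqref{eq:FF} if and only if there is no pair $(u, a)$ with $u \in \A$ and $u\,a \notin \A$, which---since all states are final and every state is reachable---holds precisely when $\A$ accepts every word, i.e.\ $\A$ is universal. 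This is a trivial, polynomial-time (indeed logspace) reduction, and it already shows PSPACE-hardness, even for a fixed $F$.

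The slightly delicate point is the role of the hypothesis ``at least two symbols in $\Sigma$'': with a single-letter alphabet, NFA universality is no longer PSPACE-hard (unary NFA universality is coNP-complete), so the reduction above would not give PSPACE-hardness, and indeed one should check that over a unary alphabet the flanked-testing problem is easier. Thus the main obstacle is not the reduction itself---which is essentially immediate once Theorem~\ref{prop:universality FFA} is in hand---but rather making sure the PSPACE-hardness source (prefix-closed NFA universality) is correctly invoked with its alphabet-size requirement, and cleanly arguing the PSPACE upper bound via the powerset characterization while keeping the space polynomial. I would present the upper bound first, then the reduction, and finally remark on the $|\Sigma| \geq 2$ condition.
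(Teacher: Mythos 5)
Your proposal is correct and follows essentially the same route as the paper: PSPACE membership via a nondeterministic search for a violating class in the powerset automaton (polynomial space per class, then Savitch/closure under complement), and PSPACE-hardness by reducing universality of a prefix-closed NFA over an alphabet with at least two symbols to checking whether $(\A, \emptyset)$ is flanked. Your added remark on the unary case and the observation that hardness holds even for the fixed empty relation $F$ are fine but not needed beyond what the paper does.
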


\begin{proof}
  We can define a simple nondeterministic algorithm for testing is
  $(\A, F)$ is flanked. We recall that the function $F^{-1}(a)$ stands
  for the set $\{ q \mid q \btrans{a} \}$ of states that ``forbids''
  the symbol $a$. As stated in Sect.~\ref{sec:testing-if-pair}, to
  test if $(\A, F)$ is flanked, we need, for every symbol
  $a \in \Sigma$, to explore the classes $C$ in the powerset automaton
  of \A and test whether $C \trans{a} C'$ in $\wp(\A)$ and whether
  $C \cap F^{-1}(A) = \emptyset$ or not. These tests can be performed
  using $|Q|$ bits since every class $C$ and every set $F^{-1}(a)$ is
  a subset of $Q$. Moreover there are at most $2^{|Q|}$ classes in
  $\wp(\A)$. Hence, using Savitch's theorem, the problem is in PSPACE.

  On the other way, we can reduce the problem of testing the
  universality of a NFA $\A$ to the problem of testing if a pair
  $(\A, \emptyset)$, where $\emptyset$ is the ``empty'' flanking
  function over $Q \times \Sigma$. The universality problem is known
  to be PSPACE-hard when the alphabet $\Sigma$ is of size at least
  $2$, even if all the states of \A are
  final~\cite{kao09:_nfas}. Indeed, to test if \A is universal, we
  showed in the proof of the previous theorem, that it is enough to
  check that $(\A, \emptyset)$ is flanked. Hence our problem is also
  PSPACE-hard.\myqed
\end{proof}

To conclude this section, we prove that the complexity of checking
language inclusion between a NFA and a FFA is in polynomial time,
therefore proving that our new class of automata as the same nice
complexity properties than those of UFA. We say that the language of
$\A_1$ is included in $\A_2$, simply denoted $\A_1 \subseteq \A_2$, if
all the words accepted by $\A_1$ are also accepted by $\A_2$.

\begin{theorem}
  Given a NFA $\A_1$ and a FFA $(\A_2, F_2)$, we can check whether
  $\A_1 \subseteq \A_2$ in polynomial time.
\end{theorem}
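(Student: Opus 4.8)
The plan is to build a product automaton between $\A_1$ and $\A_2$ and exploit the flanking function $F_2$ to detect, on the fly, any word accepted by $\A_1$ but not by $\A_2$. Concretely, I would construct the standard synchronous product $\A_1 \times \A_2$ whose states are pairs $(q_1, q_2) \in Q_1 \times Q_2$, with $(q_1, q_2) \trans{a} (q_1', q_2')$ whenever $q_1 \trans{a} q_1'$ in $\A_1$ and $q_2 \trans{a} q_2'$ in $\A_2$, and with initial states $I_1 \times I_2$. This product has at most $|\A_1| \cdot |\A_2|$ states and can be computed in time polynomial in $|\A_1| \cdot |\A_2| \cdot |\Sigma|$.

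The key observation is the following: a word $u$ witnesses $\A_1 \not\subseteq \A_2$ exactly when $u$ can be written as $v\,a$ with $v \in \A_1$, $v \in \A_2$ (the shortest such witness has an accepted proper prefix, since $\epsilon \in \A_2$ unless $I_2 = \emptyset$, a degenerate case to handle separately), $v\,a \in \A_1$, and $v\,a \notin \A_2$. Now $v \in \A_2$ and $v\,a \notin \A_2$ is, by condition~\eqref{eq:FF}, equivalent to the existence of a state $q_2 \in \A_2(v)$ with $q_2 \btrans{a}$. Tracking $v$ through the product, $\A_1(v) \times \A_2(v)$ is contained in the set of product-states reachable from $I_1 \times I_2$ by reading $v$. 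Hence $\A_1 \not\subseteq \A_2$ if and only if there is a reachable product state $(q_1, q_2)$ and a symbol $a \in \Sigma$ such that $q_1 \trans{a} q_1'$ for some $q_1'$ in $\A_1$ (so $v\,a$ stays in $\A_1$) while $q_2 \btrans{a}$ (so $v\,a$ leaves $\A_2$). I would prove both directions of this equivalence carefully: the ``only if'' direction uses the existence of a shortest witness and condition~\eqref{eq:FF}; the ``if'' direction reconstructs a witness word $v$ from a path in the product reaching $(q_1,q_2)$, then appends $a$.

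Given this characterization, the algorithm is: perform a reachability search (BFS or DFS) over the product automaton; for each reachable state $(q_1, q_2)$ and each symbol $a$, check whether $q_1$ has an outgoing $a$-transition in $\A_1$ and simultaneously $q_2 \btrans{a}$ in $F_2$; report $\A_1 \not\subseteq \A_2$ if such a pair is found, and $\A_1 \subseteq \A_2$ otherwise. Each such check is $O(1)$ (or at worst linear in $|\Sigma|$) per product state, so the whole procedure runs in time $O(|\A_1| \cdot |\A_2| \cdot |\Sigma|)$, which is polynomial. The main obstacle, and the part deserving the most care, is the correctness argument for the ``if'' direction: one must argue that the witness prefix $v$ reconstructed from the product path indeed satisfies $v \in \A_1 \cap \A_2$ and that $\A_2(v)$ genuinely contains the component $q_2$ — this follows because the product simulates both automata in lockstep, so the second component of any reachable state after reading $v$ lies in $\A_2(v)$ — together with the clean invocation of~\eqref{eq:FF} to turn the local flanking condition $q_2 \btrans{a}$ into the global statement $v\,a \notin \A_2$. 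A minor point to dispatch is the corner case where $\A_1$ accepts $\epsilon$ but $\A_2$ does not (i.e. $I_2 = \emptyset$), which makes the inclusion fail immediately and is detected by the emptiness of the product's initial state set.
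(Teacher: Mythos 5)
Your proposal is correct and follows essentially the same route as the paper: both build the synchronous product of $\A_1$ and $\A_2$, use the flanking relation $F_2$ to detect on the fly a reachable pair $(q_1,q_2)$ with an $a$-transition in $\A_1$ and $(q_2,a)\in F_2$, and invoke condition~\eqref{eq:FF} to turn this local condition into a witness word in $\A_1\setminus\A_2$ (the paper merely materializes this detection as an explicit error state $\bot$ and reduces the test to reachability of $\bot$). Your handling of the degenerate case $I_2=\emptyset$ and your remark that the product reaches exactly $\A_1(v)\times\A_2(v)$ after reading $v$ are sound and match the paper's argument.
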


\begin{proof}
  Without loss of generality, we can assume that
  $\A_1 = (Q_1, \Sigma, E_1, I_1)$ and
  $\A_2 = (Q_2, \Sigma, E_2, I_2)$ are two NFA over the same alphabet
  $\Sigma$. We define a variant of the classical product construction
  between $\A_1$ and $\A_2$ that also takes into account the
  ``pseudo-transitions'' $q \btrans{a}$ defined by the flanking
  functions.
  
  We define the product of $\A_1$ and $(\A_2, F_2)$ as the NFA
  $\A = (Q, \Sigma, E, I)$ such that $I = I_1 \times I_2$ and
  $Q = (Q_1 \times Q_2) \cup \{ \bot \}$. The extra state $\bot$ will
  be used to detect an ``error condition'', that is a word that is
  accepted by $\A_1$ and not by $\A_2$. The transition relation of \A
  is such that:
  \begin{itemize}
  \item if $q_1 \trans{a} q'_1$ in $\A_1$ and $q_2 \trans{a} q'_2$ in
    $\A_2$ then $(q_1, q_2) \trans{a} (q'_1, q'_2)$ in \A;
  \item if $q_1 \trans{a} q'_1$ in $\A_1$ and $q_2 \btrans{a}$ in
    $\A_2$ then $(q_1, q_2) \trans{a} \bot$ in \A
  \end{itemize}

  We can show that the language of $\A_1$ is included in the language
  of $\A_2$ if and only if the state $\bot$ is not reachable in
  \A. Actually, we show that any word $u$ such that $\bot \in \A(u)$
  is a word accepted by $\A_1$ and not by $\A_2$.

  We prove the first implication. Assume that every word $u$ accepted
  by $\A_1$ is accepted by $\A_2$. Hence we can prove by induction on
  the size of $u$ that $\A(u) \subseteq Q_1 \times Q_2$. On the other
  way, if $u$ is not accepted by $\A_1$ then $u$ is not accepted by
  $\A$ (there are no transitions in this case). Hence, for all words
  in $\Sigma^*$, the set $\A(u)$ does not contain $\bot$.

  For the other direction, assume that there is a word $u$ such that
  $\bot \in \A(u)$. The word $u$ cannot be $\epsilon$ since
  $\A(\epsilon) = I_1 \times I_2 \not\ni \bot$. Therefore $u$ is of
  the form $v\, a$.  Since there are no transitions from $\bot$ in \A,
  there must be a pair $(q_1, q_2) \in Q_1 \times Q_2$ such that
  $q_1 \in \A_1(v)$; $q_2 \in \A_2(v)$; $q_1 \trans{a} q'_1$ in $\A_1$
  and $q_2 \btrans{a}$ in $\A_2$. By property~\eqref{eq:FF}, since
  $(\A_1, F_1)$ and $(\A_2, F_2)$ are both flanked, we have that
  $v\, a \in \A_1$ and $v\, a \notin \A_2$, as required.

  We cannot generate more than $|Q_1|.|Q_2|$ reachable states in \A
  before finding the error $\bot$ (or stopping the
  construction). Hence this algorithm is solvable in polynomial
  time.\myqed
\end{proof}

\section{Closure Properties of Flanked Automata}
\label{sec:clos-prop-flank}

In this section, we study how to compute the composition of flanked
automata. We prove that the class of FFA is closed by language
intersection and by the ``intersection adjunct'', also called
quotient. On a negative side, we show that the class is not closed by
non-injective relabeling.

We consider the problem of computing a flanked automaton accepting the
intersection of two prefix-closed, regular languages. More precisely,
given two FFA $(\A_1, F_1)$ and $(\A_2, F_2)$, we want to compute a
FFA $(\A, F)$ that recognizes the set of words accepted by both $\A_1$
and $\A_2$, denoted simply $\A_1 \cap \A_2$.

\begin{theorem}\label{prop:intersection}
  Given two FFA $(\A_1, F_1)$ and $(\A_2, F_2)$, we can compute a FFA
  $(\mathcal{A}, F)$ for the language $\A_1 \cap \A_2$ in polynomial
  time. The NFA $\A$ has size less than $|\A_1|.|\A_2|$.
\end{theorem}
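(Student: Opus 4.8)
The plan is to use the standard product automaton for intersection as the carrier $\A$, and then to define a flanking function $F$ on it that is inherited from $F_1$ and $F_2$. Concretely, I would set $\A = (Q_1 \times Q_2, \Sigma, E, I_1 \times I_2)$ with $(q_1,q_2) \trans{a} (q_1',q_2')$ iff $q_1 \trans{a} q_1'$ in $\A_1$ and $q_2 \trans{a} q_2'$ in $\A_2$. This is the classical construction, so $\A(u) = \{(q_1,q_2) \mid q_1 \in \A_1(u), q_2 \in \A_2(u)\}$ follows by a routine induction on $|u|$, and hence $u \in \A$ iff $u \in \A_1 \wedge u \in \A_2$, i.e. $\A$ recognizes $\A_1 \cap \A_2$. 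A minor point to address is reachability: the product may contain unreachable states, but since we only care about reachable states (and the paper assumes all states reachable), I can restrict $Q$ to the reachable part without affecting $\A(u)$; the size bound $|\A| \le |\A_1|\cdot|\A_2|$ is then immediate, and the construction is clearly polynomial.

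The substantive part is choosing $F$ so that condition~\eqref{eq:FF} holds for $(\A, F)$. The natural choice is to put $((q_1,q_2), a) \in F$ exactly when $a$ is forbidden at $(q_1,q_2)$ in the product, which happens when it is forbidden on \emph{either} side but the other side could still proceed — more precisely, when for \emph{every} pair $(q_1', q_2')$ with $q_1 \trans{a} q_1'$ and $q_2 \trans{a} q_2'$ we cannot move, i.e. at least one of $q_1, q_2$ has no outgoing $a$-transition. But that local description is not quite what \eqref{eq:FF} needs, because \eqref{eq:FF} quantifies over reached \emph{sets} $\A(u)$. So the cleaner definition is: $((q_1,q_2),a)\in F$ iff ($q_1\btrans{a}$ in $\A_1$) or ($q_2\btrans{a}$ in $\A_2$). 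I would then verify \eqref{eq:FF} directly. For the ($\Rightarrow$) direction: if $u\in\A$ and $ua\notin\A$, then $u\in\A_1$, $u\in\A_2$, and $ua\notin\A_1$ or $ua\notin\A_2$; say $ua\notin\A_1$, then by \eqref{eq:FF} for $(\A_1,F_1)$ there is $q_1\in\A_1(u)$ with $q_1\btrans{a}$; pick any $q_2\in\A_2(u)$ (nonempty since $u\in\A_2$), then $(q_1,q_2)\in\A(u)$ and $((q_1,q_2),a)\in F$. For ($\Leftarrow$): if $u\in\A$ and some $(q_1,q_2)\in\A(u)$ has $((q_1,q_2),a)\in F$, then WLOG $q_1\btrans{a}$ with $q_1\in\A_1(u)$ and $u\in\A_1$, so by \eqref{eq:FF} for $\A_1$ we get $ua\notin\A_1$, hence $ua\notin\A_1\cap\A_2$, i.e. $ua\notin\A$.

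The main obstacle — and the step that needs the most care — is making sure the disjunctive definition of $F$ does not \emph{over}-forbid: I must confirm that whenever $((q_1,q_2),a)\in F$ and $(q_1,q_2)$ is reached by some word $u\in\A$, the word $ua$ is genuinely not in $\A$. The argument above handles this because a single side having $q_i\btrans{a}$ with $q_i\in\A_i(u)$ already forces $ua\notin\A_i$ by \eqref{eq:FF} applied to the flanked pair $(\A_i,F_i)$ — crucially this uses the ($\Rightarrow$)-to-conclusion direction of \eqref{eq:FF} in the form "$\exists q\in\A_i(u).\,q\btrans{a}$ implies $ua\notin\A_i$", which is exactly the right-to-left reading of the biconditional. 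So no genuine contradiction arises, but it is worth stating explicitly that the soundness of $F$ rests on both directions of \eqref{eq:FF} for the components. Finally I would note that all the checks are local (examining outgoing transitions and flanking symbols at each product state), so $F$ is computable in time polynomial in $|\A_1|\cdot|\A_2|\cdot|\Sigma|$, completing the proof.
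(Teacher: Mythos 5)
Your proposal is correct and follows essentially the same route as the paper: the standard product automaton with the flanking function defined by $((q_1,q_2),a)\in F$ iff $(q_1,a)\in F_1$ or $(q_2,a)\in F_2$, verified against condition~\eqref{eq:FF} in both directions (the paper phrases the second direction as a proof by contradiction, but the substance is identical). Your extra remarks on reachability and on why the disjunctive $F$ does not over-forbid are sound and merely make explicit what the paper leaves implicit.
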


\begin{proof}
  We define a classical product construction between $\A_1$ and $\A_2$
  and show how to extend this composition on the flanking functions.
  We assume that $\A_i$ is an automaton $(Q_i, \Sigma, E_i, I_i)$ for
  $i \in \{1,2\}$.

  The automaton $\mathcal{A} = (Q, \Sigma, E, I)$ is defined as the
  synchronous product of $\mathcal{A}_1$ and $\mathcal{A}_2$, that is:
  $Q = Q_1 \times Q_2$; $I = I_1 \times I_2$; and the transition
  relation is such that $(q_1, q_2) \trans{a} (q'_1, q'_2)$ in \A if
  both $q_1 \trans{a} q'_1$ in $\A_1$ and $q_2 \trans{a} q'_2$ in
  $\A_2$. It is a standard result that $\A$ accepts the language
  $\A_1 \cap \A_2$.

  The \emph{flanking function} $F$ is defined as follows: for each
  accessible state $(q_1, q_2) \in Q$, we have $(q_1, q_2) \btrans{a}$
  if and only if $q_1 \btrans{a}$ in $\A_1$ or $q_2 \btrans{a}$ in
  $\A_2$. What is left to prove is that $(\A, F)$ is flanked, that is,
  we show that condition~\eqref{eq:FF} is correct:
  \begin{itemize}
  \item assume $u$ is accepted by $\A$ and $u\, a$ is not; then there
    is a state $q = (q_1, q_2)$ in \A such that $q \in \A(u)$ and
    $(q, a) \in F$. By definition of \A, we have that $u$ is accepted
    by both $\A_1$ or $\A_2$, while the word $u\, a$ is not accepted
    by at least one of them. Assume that $u\, a$ is not accepted by
    $\A_1$. Since $F_1$ is a flanking function for $\A_1$, we have by
    equation~\eqref{eq:FF} that $(q_1, a) \in F_1$; and therefore $(q,
    a) \in F$, as required.

  \item assume there is a reachable state $q = (q_1, q_2)$ in \A such
    that $q \in \A(u)$ and $(q, a) \in F$; then $u$ is accepted by
    \A. We show, by contradiction, that $u\,a$ cannot be accepted by
    \A, that is $u \, a \notin \A_1 \cap \A_2$. Indeed, if so, then
    $u\,a$ will be accepted both by $\A_1$ and $\A_2$ and therefore we
    will have $(q_1, a) \notin F_1$ and $(q_2, a) \notin F_2$, which
    contradicts the fact that $(q, a) \in F$.
    \end{itemize}
\end{proof}

Next we consider the adjunct of the intersection operation, denoted
$\A_1 / \A_2$. This operation, also called \emph{quotient}, is defined
as the biggest prefix-closed language $X$ such that
$\A_2 \cap X \subseteq \A_1$. Informally, $X$ is the solution to the
following question: what is the biggest set of words $x$ such that $x$
is either accepted by $\A_1$ or not accepted by $\A_2$. Therefore the
language $\A_1 / \A_2$ is always defined (and not empty), since it
contains at least the empty word $\epsilon$. Actually, the quotient
can be interpreted as the biggest prefix-closed language included in
the set $L_1 \cup \bar{L_2}$, where $L_1$ is the language accepted by
$\A_1$ and $\bar{L_2}$ is the complement of the language of
$\A_2$. The quotient operation can also be defined by the following
two axioms:
\[ 
\text{(Ax1)} \quad \A_2 \cap (\A_1 / \A_2) \subseteq \A_1
\qquad\qquad
\text{(Ax2)}\quad 
\forall X .\, \A_2 \cap X
\subseteq \A_1 \Rightarrow X \subseteq \A_1 / \A_2
\]

The quotient operation is useful when trying to solve \emph{language
  equations problems}~\cite{7202840} and has applications in the
domain of system verification and synthesis. For instance, we can find
a similar operation in the contract framework of Benveniste et
al.~\cite{benveniste2008multiple} or in the contract framework of
Bauer et al.~\cite{bauer12}.

Our results on FFA can be use for the simplest instantiation of these
frameworks, that considers a simple trace-based semantics where the
behavior of systems is given as a regular set of words; composition is
language intersection; and implementation refinement is language
inclusion. Our work was motivated by the fact that there are no known
effective methods to compute the quotient. Indeed, to the best of our
knowledge, all the approaches rely on the determinization of NFA,
which is very expensive in practice~\cite{Raclet200893,7202840}.

Our definitions of quotient could be easily extended to replace
language intersection by synchronous product and to take into account
the addition of modalities~\cite{Raclet200893}.

\begin{theorem}\label{prop:quotient}
  Given two FFA $(\A_1, F_1)$ and $(\A_2, F_2)$, we can compute a FFA
  $(\mathcal{A}, F)$ for the quotient language $\A_1 / \A_2$ in
  polynomial time. The NFA $\A$ has size less than $|\A_1|.|\A_2| + 1$
\end{theorem}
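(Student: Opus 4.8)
The plan is to build $\A$ as a product-like construction on $\A_1$ and $\A_2$, augmented with a single extra "universal" sink state $\top$, and then to define a flanking function $F$ on the result and verify condition~\eqref{eq:FF}. The intuition is exactly the characterisation of the quotient as the largest prefix-closed language inside $L_1 \cup \bar{L_2}$: a word $u$ should be in $\A_1/\A_2$ precisely when, for every prefix $v$ of $u$, if $v \in \A_2$ then $v \in \A_1$. So I would track pairs $(q_1,q_2)$ with $q_1 \in \A_1(u)$ and $q_2 \in \A_2(u)$ as long as both automata are still "alive"; the moment $\A_2$ dies on a prefix while $\A_1$ is still alive — detected, as in the inclusion proof, by a transition $q_1 \trans{a} q_1'$ in $\A_1$ together with $q_2 \btrans{a}$ in $\A_2$ — we move to $\top$, a state with a self-loop on every symbol of $\Sigma$ and with $(\top,a)\notin F$ for all $a$. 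Once $\A_2$ has died, the word can be continued arbitrarily and stays in the quotient, which is what $\top$ records. Formally: $Q = (Q_1 \times Q_2) \cup \{\top\}$, $I = I_1 \times I_2$ (or $\{\top\}$ in the degenerate case $I_2=\emptyset$), with product transitions on $Q_1\times Q_2$, transitions $(q_1,q_2)\trans{a}\top$ whenever $q_1\trans{a}q_1'$ in $\A_1$ and $q_2\btrans{a}$ in $\A_2$, and $\top\trans{a}\top$ for every $a\in\Sigma$. This gives $|\A|\le |\A_1|.|\A_2|+1$ and is clearly computable in polynomial time.

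The first substantive step is to prove that $\A$ accepts exactly $\A_1/\A_2$. For soundness, I would show by induction on $|u|$ that if $u$ is accepted by $\A$ then every prefix $v\preceq u$ satisfies $v\in\A_2 \Rightarrow v\in\A_1$, hence $\A_2\cap L(\A)\subseteq\A_1$, using~\eqref{eq:FF} for $(\A_2,F_2)$ exactly as in the inclusion theorem to justify that reaching $\top$ means $\A_2$ has genuinely rejected; combined with prefix-closedness this gives $L(\A)\subseteq \A_1/\A_2$ by (Ax2) applied the other way — actually more directly, $L(\A)$ is prefix-closed and satisfies $\A_2\cap L(\A)\subseteq \A_1$, so $L(\A)\subseteq\A_1/\A_2$ by (Ax2). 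For completeness, I would show that any $u\in\A_1/\A_2$ is accepted: walk along $u$; as long as $\A_2$ stays alive on prefixes, $\A_1$ must also stay alive (else $u$ would have a prefix in $\A_2\setminus\A_1$, contradicting (Ax1) and prefix-closedness, using~\eqref{eq:FF} for $\A_1$), so the product component survives; and if at some prefix $va$ the automaton $\A_2$ dies, then $\A_1$ was alive on $v$ (same argument), so $v\,a$ being a prefix of $u$ forces a transition $(q_1,q_2)\trans{a}\top$ and thereafter $\top$ absorbs the rest of $u$.

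The second step is to define $F$ and check~\eqref{eq:FF}. Set $(\top,a)\notin F$ for all $a$, and for an accessible pair $(q_1,q_2)$ put $(q_1,q_2)\btrans{a}$ iff $q_1\btrans{a}$ in $\A_1$ \emph{and} $q_2\btrans{a}$ in $\A_2$ — i.e. both automata forbid $a$, so neither the product nor the $\top$-branch fires. Then for the $(\Leftarrow)$ direction of~\eqref{eq:FF}: if $u$ reaches a state witnessing $(q,a)\in F$, this must be a pair $(q_1,q_2)\in\A_1(u),\A_2(u)$ with $q_1\btrans{a},q_2\btrans{a}$; by~\eqref{eq:FF} for each $\A_i$, $u\,a\notin\A_1$ and $u\,a\notin\A_2$, and one checks $u\,a\notin\A_1/\A_2$ because $u\,a$ is a prefix of itself lying in $\A_2$? — no: $u\,a\notin\A_2$, so I instead argue directly that $u\,a\notin L(\A)$ from the construction (no transition fires on $a$ from any state in $\A(u)$, which requires knowing $\A(u)$ contains no $\top$ and every pair in it forbids $a$ — this needs the extra observation that if \emph{some} pair in $\A(u)$ allowed an $a$-move or a $\top$-move, then $u\,a\in L(\A)=\A_1/\A_2$, so the real content is that $(q,a)\in F$ for one witness does \emph{not} by itself block all of $\A(u)$). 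This is the main obstacle: unlike the intersection case, the flanking condition for a \emph{single} reached state must be reconciled with the existential quantifier in~\eqref{eq:FF} over \emph{all} states in $\A(u)$, and the presence of $\top$ (which never forbids anything) means I must show that whenever \emph{any} pair in $\A(u)$ forbids $a$ in both components, \emph{every} element of $\A(u)$ does too, i.e. $u\,a$ is uniformly rejected — this should follow from~\eqref{eq:FF} for $\A_1$ and $\A_2$ (if $q_1\btrans{a}$ and $q_1\in\A_1(u)$ then $u\,a\notin\A_1$, hence \emph{every} $q_1'\in\A_1(u)$ has $q_1'\btrans{a}$, and $\top\notin\A(u)$ because $\top\in\A(u)$ would require a prior $\top$-transition forced by $\A_2$ dying, i.e. $u\notin\A_2$, whence $q_2\btrans{a}$ fails to even be meaningful — so I must also record that $\top\in\A(u)$ iff $u$ has a strict prefix not in $\A_2$, making the two cases $\top\in\A(u)$ and $(q_1,q_2)\in\A(u)$ mutually exclusive). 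The $(\Rightarrow)$ direction is then the converse bookkeeping. Finally, polynomial time and the size bound are immediate from the construction.
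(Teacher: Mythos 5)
Your construction accepts the wrong language, and the error is in the rule for entering $\top$: you only add $(q_1,q_2)\trans{a}\top$ when $q_1\trans{a}q'_1$ in $\A_1$ \emph{and} $q_2\btrans{a}$ in $\A_2$, i.e.\ you reuse the rule from the inclusion test. For the quotient, $\top$ must be entered whenever $\A_2$ dies, \emph{whether or not $\A_1$ survives}: if $u\in\A_1\cap\A_2$ and $u\,a\notin\A_2$, then $u\,a$ lies in $\bar{L_2}$ and therefore belongs to $\A_1/\A_2$ together with all its extensions, even when $u\,a\notin\A_1$. Concretely, take $\Sigma=\{a\}$ and let $\A_1=\A_2$ be the one-state automaton with no transitions, flanked by $\{(q,a)\}$. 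Then $\A_2\cap X\subseteq\A_1$ holds for \emph{every} $X$, so $\A_1/\A_2=a^*$; but in your automaton neither the product rule nor your $\top$-rule fires from the initial pair (there is no $a$-transition in $\A_1$), so it accepts only $\epsilon$. Your own completeness argument breaks exactly here: from ``$\A_1$ was alive on $v$'' you cannot conclude that some $q_1\in\A_1(v)$ has an outgoing $a$-transition. The correct rule is: if $q_2\btrans{a}$ in $\A_2$ then $(q_1,q_2)\trans{a}\top$ for \emph{every} $q_1\in Q_1$.

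Your flanking function has the dual error. You set $(q_1,q_2)\btrans{a}$ when $q_1\btrans{a}$ and $q_2\btrans{a}$; but by \eqref{eq:FF} that is precisely the case $u\,a\notin\A_1$ and $u\,a\notin\A_2$, where $u\,a$ \emph{is} in the quotient. A word $u\in\A_1/\A_2$ fails to extend by $a$ exactly when $u\,a\in\A_2$ and $u\,a\notin\A_1$, so the right definition is: $(q_1,q_2)\btrans{a}$ iff $q_1\btrans{a}$ in $F_1$ and $q_2\trans{a}q'_2$ for some $q'_2$ in $\A_2$. With these corrected rules the ``main obstacle'' you identify (one flanking witness versus the existential over all of $\A(u)$) dissolves: for $u$ with $\top\notin\A(u)$ the reachable pairs are exactly $\A_1(u)\times\A_2(u)$, so a single witness $(q_1,q_2)$ with $q_1\btrans{a}$ and $q_2\trans{a}q'_2$ yields, via \eqref{eq:FF} applied to $(\A_1,F_1)$ and $(\A_2,F_2)$, that $u\,a\notin\A_1$ and $u\,a\in\A_2$, whence \emph{no} pair in $\A(u)$ has a product $a$-transition and \emph{no} pair has an $a$-transition to $\top$; and if $\top\in\A(u)$ then $u\,a\in\A$ for all $a$, so no flank is ever needed there.
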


\begin{proof}
  Without loss of generality, we can assume that
  $\A_1 = (Q_1, \Sigma, E_1, I_1)$ and
  $\A_2 = (Q_2, \Sigma, E_2, I_2)$ are two NFA over the same alphabet
  $\Sigma$. Like in the construction for testing language inclusion,
  we define a variant of the classical product construction between
  $\A_1$ and $\A_2$ that also takes into account the flanking
  functions.
  
  We define the product of $(\A_1, F_1)$ and $(\A_2, F_2)$ as the NFA
  $\A = (Q, \Sigma, E, I)$ such that $I = I_1 \times I_2$ and
  $Q = (Q_1 \times Q_2) \cup \{ \top \}$. The extra state $\top$ will
  be used as a sink state from which every suffix can be accepted. The
  transition relation of \A is such that:
  \begin{itemize}
  \item if $q_1 \trans{a} q'_1$ in $\A_1$ and $q_2 \trans{a} q'_2$ in
    $\A_2$ then $(q_1, q_2) \trans{a} (q'_1, q'_2)$ in \A;
  \item if $q_2 \btrans{a}$ in $\A_2$ then $(q_1, q_2) \trans{a} \top$
    in \A for all state $q_1 \in Q_1$
  \item $\top \trans{a} \top$ for every $a \in \Sigma$
  \end{itemize}

  Note that we do not have a transition rule for the case where
  $q_1 \btrans{a}$ in $\A_1$ and $q_2 \trans{a} q'_2$; this models the
  fact that a word ``that can be extended'' in $\A_2$ but not in
  $\A_1$ cannot be in the quotient $\A_1 / \A_2$. It is not difficult
  to show that $\A$ accepts the language $\A_1 / \A_2$. We give an
  example of the construction in Figure~\ref{fig:quotien}.

  Next we show that \A is flankable and define a suitable flanking
  function. Let $F$ be the relation in $Q \times \Sigma$ such that
  $(q_1, q_2) \btrans{a}$ if and only if $q_1 \btrans{a}$ in $F_1$ and
  $q_2 \trans{a} q'_2$ in $\A_2$. That is, the symbol $a$ is forbidden
  exactly in the case that was ruled out in the transition relation of
  \A. What is left to prove is that $(\A, F)$ is flanked, that is, we
  show that condition~\eqref{eq:FF} is correct:
  \begin{itemize}
  \item Assume $u$ is accepted by $\A$ and $u\, a$ is not.  Since
    $u\,a$ is not accepted, it must be the case that $q \neq \top$.
    Therefore there is a state $q = (q_1, q_2)$ in \A such that
    $q_1 \in \A_1(u)$ and $q_2 \in \A_2(u)$.  Also, since there are no
    transition with label $a$ from $q$, then necessarily
    $q_1 \btrans{a}$ in $\A_1$ and $q_2 \trans{a} q'_2$. This is
    exactly the case where $(q, a) \in F$, as required.

  \item Assume there is a reachable state $q$ in \A such that
    $q \in \A(u)$ and $(q, a) \in F$.  Since $(q, a) \in F$, we have
    $q \neq \top$ and therefore $q = (q_1, q_2)$ with
    $q_1 \in \A_1(u)$ and $q_2 \in \A_2(u)$. Hence $u$ is accepted by
    \A. Next, we show by contradiction that $u\,a$ cannot be accepted
    by \A. Indeed, if it was the case then $u\,a \in \A_2$ and
    $u\,a \notin \A_2$.  However, if $u\,a \in \A_2$ then,
    $(q_2, a) \notin F_2$ and so, by construction,
    $((q_1, q_2), a) \notin F$.
    \end{itemize}
    \myqed
\end{proof}

We give an example of the construction of the ``quotient'' FFA in
Fig.~\ref{fig:quotien}. If we look more closely at the construction
used in Theorem~\ref{prop:quotient}, that defines an automaton for
the quotient of two FFA $(\A_1, F_1)$ and $(\A_2, F_2)$, we see that
the flanking function $F_1$ is used only to compute the flanking
function of the result. Therefore, as a corollary, it is not difficult
to prove that we can use the same construction to build a quotient
automaton for $\A_1 / \A_2$ from an arbitrary NFA $\A_1$ and a FFA
$(\A_2, F_2)$. However the resulting automaton may not be flankable.

\begin{figure}
\centering
\begin{minipage}[t]{0.3\linewidth}
  \centering
  \caption*{$F_1 = \{(q_0, b), (q_1, a)\}$\\$\mathcal{A}_1 :$}
  \begin{tikzpicture}[scale=1.5]
    \node [place, double] (p0) at (0,0) {$q_0$};
    \node [place, double] (p1) at (0,-1) {$q_1$};

    \path[edge] node[above]  {} ++(-0.5,0) -- (p0) ;

    \path[edge] (p0) edge[bend right] node[left] {$a$} (p1);
    \path[edge] (p1) edge[bend right] node[right] {$b$} (p0);
  \end{tikzpicture}
\end{minipage}\hfill
\begin{minipage}[t]{0.3\linewidth}
  \centering
  \caption*{$F_2 = \{(q_1, a)\}$\\$\mathcal{A}_2 :$}
  \begin{tikzpicture}[scale=1.5]
    \node [place, double] (p0) at (0,0) {$q_0$};
    \node [place, double] (p1) at (0,-1) {$q_1$};

    \path[edge] node[above]  {} ++(-0.5,0) -- (p0) ;

    \path[edge] (p0) edge node[left] {$a$} (p1);
    \path[edge] (p0) edge[loop right] node[right] {$b$} (p0);
    \path[edge] (p1) edge[loop right] node[right] {$b$} (p1);

  \end{tikzpicture}
\end{minipage}\hfill
\begin{minipage}[t]{0.4\linewidth}
  \centering
  \caption*{$F = \{(q_0, b), (q_2, b)\}$\\$\mathcal{A} = \A_1
    / \A_2 :$}
  \begin{tikzpicture}[scale=1.5]
    \node [place, double] (p0) at (0,0) {$\,q_0, q_0\,$};
    \node [place, double] (p1) at (0,-1) {$\,q_1, q_1\,$};
    \node [place, double] (p2) at (0,-2) {$\,q_0, q_1\,$};
    \node [place, double] (p3) at (1,-1.5) {$\top$};

    \path[edge] node[above]  {} ++(-0.5,0) -- (p0) ;

    \path[edge] (p0) edge node[left] {$a$} (p1);
    \path[edge] (p1) edge node[left] {$b$} (p2);
    \path[edge] (p1) edge node[above] {$a$} (p3);
    \path[edge] (p2) edge node[below] {$a$} (p3);

    \path[edge] (p3) edge[loop right] node[right] {$a, b$} (p3);

  \end{tikzpicture}
\end{minipage}
\caption{\label{fig:quotien} Construction for the quotient of two FFA
  $(\mathcal{A}_1, F_1)$ and $(\mathcal{A}_2, F_2)$.}
\end{figure}

We can also prove that flankability is preserved by language union:
given two FFA $(\A_1, F_1)$ and $(\A_2, F_2)$, we can compute a FFA
$(\A, F)$ that recognizes the set of words accepted either by $\A_1$
or by $\A_2$, denoted $\A_1 \cup \A_2$. Operations corresponding to
the adjunct of the union or the to Kleene star closure are not
interesting in the context of automaton where every state is final and
therefore they are not studied in this paper.

\begin{theorem}
  Given two FFA $(\A_1, F_1)$ and $(\A_2, F_2)$, we can compute a FFA
  $(\mathcal{A}, F)$ for the language $\A_1 \cup \A_2$ in polynomial
  time. The NFA $\A$ has size less than $(|\A_1| + 1).(|\A_2| + 1)$.
\end{theorem}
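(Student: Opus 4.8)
The plan is to adapt the product constructions already used for inclusion (Theorem~3) and for the quotient (Theorem~4). First I would assume without loss of generality that $I_1$ and $I_2$ are nonempty, since if, say, $I_1 = \emptyset$ then $\A_1$ accepts the empty language and $\A_1 \cup \A_2 = \A_2$, so one may simply return $(\A_2, F_2)$. The idea is to add a fresh ``dead'' symbol $\bot_i$ to each $Q_i$ --- intuitively, $\bot_i$ records that the current prefix has already left the language of $\A_i$ --- and to take as state space $Q = \big((Q_1 \cup \{\bot_1\}) \times (Q_2 \cup \{\bot_2\})\big) \setminus \{(\bot_1,\bot_2)\}$, with $I = I_1 \times I_2$. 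For $x_1 \in Q_1 \cup \{\bot_1\}$ and $a \in \Sigma$, set $\delta_1(x_1,a) = \{ x_1' \in Q_1 \mid x_1 \trans{a} x_1' \text{ in } \A_1 \}$ if $x_1 \in Q_1$ has an outgoing $a$-transition; $\delta_1(x_1,a) = \{\bot_1\}$ if $x_1 = \bot_1$ or $x_1 \btrans{a}$ in $F_1$; and $\delta_1(x_1,a) = \emptyset$ otherwise. One checks from~\eqref{eq:FF} that these three cases are mutually exclusive for states reachable in $\A_1$; define $\delta_2$ symmetrically. The transitions of \A are then $(x_1,x_2) \trans{a} (x_1',x_2')$ whenever $x_1' \in \delta_1(x_1,a)$, $x_2' \in \delta_2(x_2,a)$, and $(x_1',x_2') \neq (\bot_1,\bot_2)$; finally, one keeps only the accessible part of \A.

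The key lemma, proved by induction on $|u|$, is the powerset-level invariant $\A(u) = \big(g_1(u) \times g_2(u)\big) \setminus \{(\bot_1,\bot_2)\}$, where $g_i(u) = \A_i(u)$ if $u \in \A_i$ and $g_i(u) = \{\bot_i\}$ otherwise. The base case is immediate from $I = I_1 \times I_2$. For the inductive step, the direction ``$\supseteq$'' is where condition~\eqref{eq:FF} for $(\A_1,F_1)$ and $(\A_2,F_2)$ enters: if $\bot_1 \in g_1(ua)$, i.e.\ $ua \notin \A_1$, then either $u \notin \A_1$ (and $\bot_1 \in \delta_1(\bot_1,a)$), or $u \in \A_1$ and $ua \notin \A_1$, in which case~\eqref{eq:FF} yields a state $q_1 \in \A_1(u)$ with $q_1 \btrans{a}$ and $\bot_1 \in \delta_1(q_1,a)$; similarly for the second coordinate, and the two chosen source states form a pair of $\A(u)$ distinct from $(\bot_1,\bot_2)$. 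From the invariant, $u \in \A$ iff $\A(u) \neq \emptyset$ iff $u \in \A_1 \cup \A_2$, so \A recognises $\A_1 \cup \A_2$.

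Next I would define the flanking function by $((x_1,x_2),a) \in F$ iff $\delta_1(x_1,a) = \{\bot_1\}$ and $\delta_2(x_2,a) = \{\bot_2\}$; equivalently, both coordinates are ``blocked toward $\bot$'', that is $x_i = \bot_i$ or $x_i \btrans{a}$ in $F_i$. Checking~\eqref{eq:FF} for $(\A,F)$ then combines the invariant with~\eqref{eq:FF} for the two inputs. If $(x_1,x_2) \in \A(u)$ and $((x_1,x_2),a) \in F$, then each coordinate forces $ua \notin \A_i$ --- either because $u \notin \A_i$, or, if $u \in \A_i$, by the $\Leftarrow$ direction of~\eqref{eq:FF} applied to $x_i \btrans{a}$ in $F_i$ --- hence $ua \notin \A$. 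Conversely, if $u \in \A$ and $ua \notin \A$, then $ua \notin \A_1$ and $ua \notin \A_2$, and for each $i$ one picks a source state ($\bot_i$ when $u \notin \A_i$, or a $\btrans{a}$-state supplied by the $\Rightarrow$ direction of~\eqref{eq:FF} when $u \in \A_i$); the resulting pair lies in $\A(u)$ (using the invariant, and noting it cannot equal $(\bot_1,\bot_2)$ since $u \in \A$) and forbids $a$. Finally $|Q| = (|\A_1|+1)(|\A_2|+1) - 1 < (|\A_1|+1)(|\A_2|+1)$, and the whole construction is clearly polynomial.

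The main obstacle is the inductive proof of the powerset invariant, and specifically the ``$\emptyset$'' case of $\delta_i$: a coordinate state that neither carries an $a$-transition nor forbids $a$, so that certain pairs of $\A(u)$ have no $a$-successor at all. I expect the argument to go through precisely because $\A(u)$ is a full Cartesian product (minus one point): every successor pair that is needed is still produced by some other, surviving pair, so dropping the dead pairs does not shrink $\A(ua)$ below $g_1(ua) \times g_2(ua)$. I would take care to spell this point out rather than gloss over it, and also to double-check the degenerate situations ($I_i = \emptyset$, or $u \notin \A$ so that some $g_i(u) = \{\bot_i\}$ or $\A(u) = \emptyset$) where the invariant degenerates.
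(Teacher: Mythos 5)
Your construction is essentially the paper's: your two dead symbols $\bot_1,\bot_2$ play exactly the role of the paper's single sink symbol $\top$ in each coordinate, your transition rules and flanking function coincide with the paper's five transition rules and three flanking rules, and your state count matches the stated bound. The proposal is correct; you merely supply the details (the product-of-$g_i$ powerset invariant and the check of condition~\eqref{eq:FF}) that the paper leaves to the reader as ``very similar to the intersection case.''
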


\begin{proof}
  Like for language intersection and language inclusion, we base our
  construction on a variant of the classical product construction
  between $\A_1$ and $\A_2$ and show how to extend this composition on
  the flanking functions. We assume that $\A_i$ is an automaton
  $(Q_i, \Sigma, E_i, I_i)$ for $i \in \{1,2\}$ and that both
  automaton have the same alphabet.

  We consider a special state symbol $\top$ not in $Q_1 \cup Q_2$.
  This state will be used in $\A$ when we start accepting words that
  are not in the intersection of $\A_1$ and $\A_2$. The automaton
  $\mathcal{A} = (Q, \Sigma, E, I)$ is such that:
  $Q \subseteq (Q_1 \cup \{\top\}) \times (Q_2 \cup \{ \top\})$;
  $I = I_1 \times I_2$; and the transition relation is such that:
  \begin{itemize}
  \item if $q_1 \trans{a} q'_1$ in $\A_1$ and $q_2 \trans{a} q'_2$ in
    $\A_2$ then $(q_1,q_2) \trans{a} (q'_1,q'_2)$ in $\A$;
  \item if $q_1 \trans{a} q'_1$ in $\A_1$ and $q_2 \btrans{a}$ in
    $\A_2$ then $(q_1,q_2) \trans{a} (q'_1,\top)$ in $\A$;
  \item if $q_1 \btrans{a}$ in $\A_1$ and $q_2 \trans{a} q'_2$ in
    $\A_2$ then $(q_1,q_2) \trans{a} (\top,q'_2)$ in $\A$;
  \item if $q_1 \trans{a} q'_1$ in $\A_1$ then
    $(q_1,\top) \trans{a} (q'_1,\top)$ in $\A$;
  \item if $q_2 \trans{a} q'_2$ in $\A_2$ then
    $(\top, q_2) \trans{a} (\top, q'_2)$ in $\A$.
  \end{itemize}
  It is not difficult to prove that the NFA $\A$ accepts all the words
  in $\A_1 \cup \A_2$.

  The \emph{flanking function} $F$ is defined as the smallest relation
  such that, for each accessible state
  $(q_1, q_2) \in Q_1 \times Q_2$:
  \begin{itemize}
  \item if both $q_1 \btrans{a}$ in $F_1$ and $q_2 \btrans{a}$ in
    $F_2$ then $(q_1, q_2) \btrans{a}$ in $F$;
  \item if  $q_1 \btrans{a}$ in $F_1$ then $(q_1, \top) \btrans{a}$ in $F$;
  \item if $q_2 \btrans{a}$ in $F_2$ then $(\top, q_2) \btrans{a}$ in
    $F$
  \end{itemize}

  We are left to prove that $(\A, F)$ is flanked, that is
  condition~\eqref{eq:FF} is correct. The proof is very similar to the
  one for Theorem~\ref{prop:intersection}.\myqed
\end{proof}

The two main closure properties given in this section are useful when
we want to check language inclusion between the composition of several
languages; for example if we need to solve, for $X$, the equation
$\A_1 \cap \dots \cap \A_n \cap X \subseteq \B$. This is the case, for
example, if we need to synthesize a discrete controller, $X$, that
satisfies a given requirement specification $\B$ when put in parallel
with components whose behavior is given by $\A_i$ (with $i \in 1..n$).
Indeed, even though there may be a small price to pay to ``flank'' the
sub-components of this equation, we can incrementally build a flanked
automaton for $\A_1 \cap \dots \cap \A_n$ and then compute efficiently
the quotient $\B / (\A_1 \cap \dots \cap \A_n)$.

Even though the class of FFA enjoys interesting closure properties,
there are operations that, when applied to a FFA, may produce a result
that is not flankable. This is for example the case with
``(non-injective) relabeling'', that is the operation of applying a
substitution over the symbols of an automaton. The same can be
observed if we consider an erasure operation, in which we can replace
all transition on a given symbol by an
$\epsilon$-transition. Informally, it appears that the property
flankable can be lost when applying an operation that increases the
non-determinism of the transition relation.

We can prove this result by exhibiting a simple counterexample, see
the automaton in Fig.~\ref{fig:relabeling}. This automaton with
alphabet $\Sigma = \{a, b, c\}$ is deterministic, so we can easily
define an associated flanking function. For example we can choose
$F = \{ (q_1, a), (q_1, b), (q_1, c), (q_2, a),$
$(q_2, c), (q_3, a), (q_3, b), (q_3, c)\}$.  However, if we substitute
the symbol $c$ with $a$ (that is we apply the non-injective relabeling
function $\{a \leftarrow a\}\{b \leftarrow b\}\{c \leftarrow a\}$), we
obtain the non-flankable automaton described in
Sect.~\ref{sec:testing-if-pair} (see Fig.~\ref{fig:flankable}).

\begin{figure}
\centering
  \begin{tikzpicture}[scale=1.5]
    \node [place, double] (p0) at (0,0) {$q_0$};
    \node [place, double] (p1) at (-0.5,-1) {$q_1$};
    \node [place, double] (p2) at (0.5,-1) {$q_2$};
    \node [place, double] (p3) at (0.5,-2) {$q_3$};

    \path[edge] (p0) edge[bend left] node[right] {$a$} (p1);
    \path[edge] (p0) edge[bend right] node[left] {$b$} (p1);
    \path[edge] (p0) edge[bend left] node[right] {$c$} (p2);
    \path[edge] (p2) edge node[right] {$b$} (p3);

    \path[edge] node[above]  {} ++(-0.5,0) -- (p0) ;
  \end{tikzpicture}
  \caption{\label{fig:relabeling}Example of a FFA not flankable after relabeling $c$ to $a$.}
\end{figure}
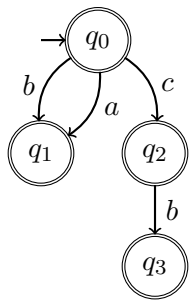

\section{Succinctness of Flanked Automata}
\label{sec:succ-flank-autom}

In this section we show that a flankable automata can be exponentially
more succinct than its equivalent minimal DFA. This is done by
defining a language over an alphabet of size $2\,n$ that can be
accepted by a linear size FFA but that corresponds to a minimal DFA
with an exponential number of states. This example is due to Thomas
Colcombet~\cite{colcombet15:_flank}.

At first sight, this result may seem quite counterintuitive. Indeed,
even if a flanked automata is build from a NFA, the combination of the
automaton and the flanking function contains enough information to
``encode'' both a language and its complement. This is what explain
the good complexity results on testing language inclusion for
example. Therefore we could expect worse results concerning the
relative size of a FFA and an equivalent DFA.

\begin{theorem}\label{thm:FFAvsDFA}
  For every integer $n$, we can find a FFA $(\A_n, F)$ such that
  $\A_n$ has $2\,n + 2$ states and that the language of $\A_n$ cannot
  be accepted by a DFA with less than $2^n$ states.
\end{theorem}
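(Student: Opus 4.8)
The plan is to exhibit an explicit family of languages witnessing the gap. Fix $n$ and work over the alphabet $\Sigma_n=\{a_1,\dots,a_n,b_1,\dots,b_n\}$ of size $2n$. For $1\le i\le n$ let $M_i$ be the language of all words in which every occurrence of $b_i$ is preceded by an occurrence of $a_i$; each $M_i$ is prefix-closed, and once a word contains an $a_i$ it stays in $M_i$ under every extension (``$M_i$ saturates''). Put $L_n=M_1\cup\dots\cup M_n$, which is again prefix-closed. The automaton $\A_n$ will have a ``watch'' state $p_i$ for each $i$, meaning ``index $i$ is not yet falsified and not yet saturated''; a universal sink $\top$ with $\top\trans{c}\top$ for every $c\in\Sigma_n$, entered from $p_i$ on reading $a_i$; and a further layer of $n$ states plus one auxiliary state, whose sole role is to make the pair flankable --- for $2n+2$ states altogether. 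The transitions on the $p_i$ are the expected ones: reading $a_j$ or $b_j$ with $j\neq i$ keeps $p_i$, reading $b_i$ removes it, and the initial set is $\{p_1,\dots,p_n\}$. I would then check, by a direct induction on $u$, which subsets of states form $\A_n(u)$, and in particular that $u\in\A_n$ if and only if $u\in L_n$.

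The heart of the proof is the flanking function $F$ together with the verification of~\eqref{eq:FF}. The key point is that $u\in\A_n$ while $ua\notin\A_n$ can happen only when $a=b_i$ for some $i$ that is the \emph{unique} index still alive after $u$ --- i.e.\ neither $a_i$ nor $b_i$ occurs in $u$, while for every $j\neq i$ some $b_j$ occurs in $u$ before any $a_j$ --- since any other surviving index would contribute a state of $\A_n(u)$ carrying a $b_i$-transition. Hence $F$ must be supported exactly on the ``last-survivor'' states, and the purpose of the extra layer of states is precisely to make such a state appear in $\A_n(u)$ in exactly this situation; note that a plain disjoint union of the obvious two-state automata for the $M_i$ recognises $L_n$ but is \emph{not} flankable, because the relevant last-survivor condition is global and cannot be read off any single state of such a union (this monotonicity obstruction is the reason the extra states are needed). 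Concretely I would prove: (i) if $u\in\A_n$ and $ua\notin\A_n$ then $\A_n(u)$ contains a last-survivor state $q$ for the unique remaining index $i$ and $a=b_i$, so $(q,a)\in F$; and (ii) if $(q,a)\in F$ and $q\in\A_n(u)$ then no state of $\A_n(u)$ has an $a$-transition, whence $u\in\A_n$ and $ua\notin\A_n$. Both directions reduce to an exact description of the reachable subsets of states, and arranging that the auxiliary states become reachable at $\A_n(u)$ exactly when a single index survives is, I expect, the main obstacle; the remaining bookkeeping is routine.

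For the lower bound on the equivalent DFA I would use a Myhill--Nerode / fooling-set argument. For $S\subseteq\{1,\dots,n\}$ let $w_S$ be the concatenation, in any order, of the letters $b_j$ with $j\notin S$; then $w_S\in L_n$ iff $S\neq\emptyset$. Given $S\neq S'$, pick $i$ in their symmetric difference, say $i\in S\setminus S'$, and let $x$ be the concatenation of the $b_j$ with $j\in S'$. Then $w_{S'}x$ contains, for every $j$, a $b_j$ not preceded by any $a_j$, so $w_{S'}x\notin L_n$; whereas $w_Sx$ contains neither $a_i$ nor $b_i$, so $w_Sx\in M_i\subseteq L_n$. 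Thus the $2^n$ words $\{w_S : S\subseteq\{1,\dots,n\}\}$ lie in pairwise distinct Myhill--Nerode classes of $L_n$, so every DFA recognising $L_n$ has at least $2^n$ states. Together with the FFA $(\A_n,F)$ of $2n+2$ states built in the previous steps, this yields the claimed exponential succinctness gap.
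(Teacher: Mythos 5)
Your overall strategy (a small FFA together with a Myhill--Nerode lower bound for the DFA) matches the paper's, and your lower-bound argument for your language is correct; but the step you defer as ``the main obstacle''---arranging that a last-survivor state for index $i$ is reachable exactly when every other index has been falsified---is not routine bookkeeping that remains to be done: it is provably impossible, because your language $L_n=M_1\cup\dots\cup M_n$ admits \emph{no} small flanked automaton at all. Indeed, suppose $(\A,F)$ is any FFA accepting your $L_n$. As you computed, $u\in L_n$ and $u\,b_i\notin L_n$ holds exactly when $u$ lies in the language $K_i$ of words containing neither $a_i$ nor $b_i$ in which, for every $j\neq i$, some $b_j$ occurs before any $a_j$. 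Condition~\eqref{eq:FF} then forces: some state flagged with $b_i$ belongs to $\A(u)$ if and only if $u\in K_i$ (one direction is the left-to-right implication, the other is the right-to-left one, and $\A(u)=\emptyset$ when $u\notin L_n$). In other words, $\A$ with accepting set $\{q : (q,b_i)\in F\}$ is an NFA recognizing exactly $K_i$. Now apply the fooling-set bound to $K_i$: for $S\subseteq\{1,\dots,n\}\setminus\{i\}$ let $x_S$ be the concatenation of the $b_j$ with $j\in S$ and $y_S$ the concatenation of the $b_j$ with $j\notin S\cup\{i\}$. Each $x_S\,y_S$ is in $K_i$, while for $S\neq S'$ one of the cross words $x_S\,y_{S'}$, $x_{S'}\,y_S$ contains no $b_j$ at all for some $j\neq i$ and hence is not in $K_i$. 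Therefore any NFA for $K_i$, and so any FFA for your $L_n$, has at least $2^{n-1}$ states, contradicting the bound $2n+2$ as soon as $n\ge 5$. The obstruction is exactly the global conjunction you yourself noticed: a nondeterministic run can guess and track one existential condition, but no state of a small NFA can have its reachability certify a conjunction of $n-1$ independent order constraints.

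The paper sidesteps this by choosing a witness language whose frontier is \emph{locally} witnessable: words over $\Pi_n=\{1,\dots,n\}$, optionally extended by a single final $\sharp_i$ provided the letter $i$ occurs in the word. There, for $u\in\Pi_n^*$ the extension $u\,\sharp_i$ is forbidden precisely when $u$ contains no $i$, and a single state $p_i$ (``committed to ending with $\sharp_i$, but $i$ not yet seen'') is reachable on exactly the nonempty words over $\Pi_n\setminus\{i\}$; flagging $(p,\sharp_i)$, $(p_i,\sharp_i)$ and $(r,a)$ for every letter $a$ (where $r$ is the sink entered after $\sharp_i$) satisfies~\eqref{eq:FF} with $2n+2$ states, and the $2^n$ DFA lower bound is the same subset-recording argument you gave. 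So keep your Myhill--Nerode part, but replace your language by one, such as the paper's, in which non-extendability of $u$ by a letter is equivalent to an existential, single-run-trackable property of $u$ rather than a global ``all other indices are dead'' condition.
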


\begin{proof}
  We consider two alphabets with $n$ symbols:
  $\Pi_n = \{ 1, \dots, n \}$ and
  $\Theta_n = \{\sharp_1,\sharp_2,\dots,\sharp_n\}$. We define the
  language $L_n$ over the alphabet $\Pi_n \cup \Theta_n$ as the
  smallest set of words such that:
  \begin{itemize}
  \item all words in $\Pi_n^*$ are in $L_n$, that is all the words
    that do not contain a symbol of the kind $\sharp_i$;
  \item a word of the form $(u \, \sharp_i)$ is in $L_n$ if and only
    if $u$ is a word of $\Pi_n^*$ that contains at least one
    occurrence of the symbol $i$. That is $L_n$ contains all the words
    of the form $\Pi_n^* \cdot i \cdot \Pi_n^* \cdot \sharp_i$ for all
    $i \in 1..n$. We denote $L^i_n$ the regular language consisting of
    the words of the form
    $\Pi_n^* \cdot i \cdot \Pi_n^* \cdot \sharp_i$.
  \end{itemize}
  Clearly the language $L_n$ is the union of $n+1$ regular languages;
  $L = \Pi_n^* \cup L^1_n \cup \dots \cup L^n_n$. It is also easy to
  prove that $L_n$ is prefix-closed, since the set of prefixes of the
  words in $L^i_n$ is exactly $\Pi_n^*$ for all $i \in 1..n$.

  A DFA accepting the language $L_n$ must have at least $2^n$
  different states. Indeed it must be able to record the subset of
  symbols in $\Pi_n$ that have already been seen before accepting
  $\sharp_i$ as a final symbol; to accept a word of the form
  $u \, \sharp_i$ the DFA must know whether $i$ has been seen in $u$
  for all possible $i \in 1..n$.

  Next we define a flankable NFA
  $\A_n = (Q_n, \Pi_n \cup \Theta_n, E_n, \{p\})$ with $2\,n + 2$
  states that can recognize the language $L_n$. We give an example of
  the construction in Fig.~\ref{img:exampleLIAFA} for the case
  $n = 3$.  The NFA $\A_n$ has a single initial state, $p$, and a
  single sink state (a state without outgoing transitions), $r$. The
  set $Q_n$ also contains two states, $p_i$ and $q_i$, for every
  symbol $i$ in $\Pi$.

  The transition relation $E_n$ is the smallest relation that contains
  the following triplets for all $i \in 1..n$:
  \begin{itemize}
  \item the $3$ transitions $p \trans{i} q_i$;
    $p_i \trans{i} q_i$; and $q_i \trans{i} q_i$;
  \item for every index $j \neq i$, the $3$ transitions
    $p \trans{j} p_i$; $p_i \trans{j} p_i$; and $q_i \trans{j} q_i$;
  \item and the transition $q_i \trans{\sharp_i} r$.
  \end{itemize}

  Intuitively, a transition from $p$ to $p_i$ or $q_i$ will select
  non-deterministically which final symbol $\sharp_i$ is expected at
  the end of the word (which sub-language $L^i_n$ we try to
  accept). Once a symbol in $\Theta$ has been seen---in one of the
  transition of the kind $q_i \trans{\sharp_i} r$---the automaton is
  stuck on the state $r$. It is therefore easy to prove that $\A_n$
  accepts the union of the languages $L^i_n$ and their prefixes.

  Finally, the NFA $\A_n$ is flankable. It is enough to choose, for
  the flanking function, the smallest relation on $Q \times \Theta_n$
  such that $p_i \btrans{\sharp_i}$ and $p \btrans{\sharp_i}$ for all
  $i \in 1..n$; and such that $r \btrans{a}$ for all the symbols
  $a \in \Pi_n \cup \Theta_n$. Indeed, it is not possible to accept
  the symbol $\sharp_i$ from the initial state, $p$, or from a word
  that can reach $p_i$; that is, it is not possible to extend a word
  without any occurrence of the symbol $i$ with the symbol
  $\sharp_i$. Also, it is not possible to extend a word that can reach
  the state $r$ in $\A_n$. It is easy to prove that this cover all the
  possible words not accepted by $\A_n$.\myqed
\end{proof}


\begin{figure}[h!]
  \center
  \begin{tikzpicture}[scale=1.2]
    \node [place, double] (p0) at (0,0) {};

    \node [place, double] (pa) at (-2,0) {};
    \node [place, double] (pb) at (0,-1) {};
    \node [place, double] (pc) at (+2,0) {};

    \node [place, double] (pA) at (-2,-1.5) {};
    \node [place, double] (pB) at (0,-2) {};
    \node [place, double] (pC) at (+2,-1.5) {};

    \node [place, double] (pf) at (0,-3) {};

    \path[edge] node[above]  {} ++(0,+0.5) -- (p0) ;

    \path[edge] (p0) edge node[above] {$2, 3$} (pa);
    \path[edge] (p0) edge node[right] {$1, 3$} (pb);
    \path[edge] (p0) edge node[above] {$1, 2$} (pc);

    \path[edge] (p0) edge node[left] {$1$} (pA);
    \path[edge] (p0) edge[bend right] node[left] {$2$} (pB);
    \path[edge] (p0) edge node[right] {$3$} (pC);

    \path[edge] (pa) edge[loop left] node[above] {$2, 3$} (pa);
    \path[edge] (pb) edge[loop right] node[above] {$1, 3$} (pb);
    \path[edge] (pc) edge[loop right] node[above] {$1, 2$} (pc);

    \path[edge] (pa) edge node[left] {$1$} (pA);
    \path[edge] (pb) edge node[right] {$2$} (pB);
    \path[edge] (pc) edge node[right] {$3$} (pC);

    \path[edge] (pA) edge[loop left] node[above] {$1, 2, 3$} (pA);
    \path[edge] (pB) edge[loop right] node[above] {$1, 2, 3$} (pB);
    \path[edge] (pC) edge[loop right] node[above] {$1, 2, 3$} (pC);

    \path[edge] (pA) edge node[left] {$\sharp_1$} (pf);
    \path[edge] (pB) edge node[right] {$\sharp_2$} (pf);
    \path[edge] (pC) edge node[right] {$\sharp_3$} (pf);
  \end{tikzpicture}
    \caption{\label{img:exampleLIAFA}Flankable NFA for the language $L_3$.}
\end{figure}
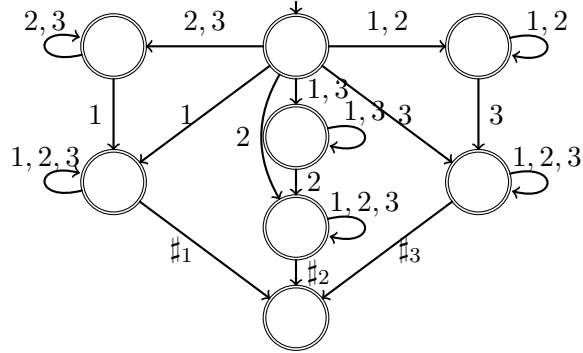

\section{Conclusion}
\label{sec:conclusion}

We define a new subclass of NFA for prefix-closed languages called
flanked automata. Intuitively, a FFA $(\A, F)$ is a simple extension
of NFA where we add in the relation $F$ extra information that can be
used to check (non-deterministically) whether a word is not accepted
by \A. Hence a FFA can be used both to test whether a word is in the
language associated to \A or in its complement. As a consequence, we
obtain good complexity results for several interesting problems:
universality, language inclusion, \dots This idea of adding
extra-information to encode both a language and its complement seems
to be new. It is also quite different from existing approaches used to
to define subclasses of NFA with good complexity properties, like for
example unambiguity~\cite{schmidt78,stearns1985equivalence}.  Our work
could be extended in several ways.

First, we have implemented all our proposed algorithms and
constructions and have found that---for several examples coming from
the system verification domain---it was often easy to define a
flanking function for a given NFA (even though we showed in
Sect.~\ref{sec:testing-if-an} that it is not always possible). More
experimental work is still needed, and in particular the definition of
a good set of benchmarks.

Next, we have used the powerset construction multiple time in our
definitions. Most particularly as a way to test if a FFA is flanked
or if a NFA is flankable. Other constructions used to check language
inclusion or simulation between NFA could be useful in this context
like, for example, the antichain-based method~\cite{abdulla10}.

Finally, we still do not know how to compute a ``succinct'' flanked
automaton from a NFA that is not flankable. At the moment, our only
solution is to compute a minimal equivalent DFA (since DFA are always
flankable). While it could be possible to subsequently simplify the
DFA---which is known to be computationally hard~\cite{tao}, even
without taking into account the flanked function---it would be
interesting to have a more direct construction. This interesting open
problem is left for future investigations.

\subparagraph*{Acknowledgments} We thank Denis Kuperberg, Thomas
Colcombet, and Jean-Eric Pin for providing their expertise and insight
and for suggesting the example that led to the proof of
Theorem~\ref{thm:FFAvsDFA}.

\newpage

\bibliographystyle{plain}
\bibliography{ffa}

\end{document}